\def\qed{\hfill$\Box$\par\vskip1em}
\newtheorem{theorem}{Theorem}
\newtheorem{definition}{Definition}
\newtheorem{lemma}[theorem]{Lemma}
\newenvironment{proof}{\textbf{Proof.} } 
\title{An asynchronous message-passing distributed algorithm for the global critical section problem\thanks{This is a modified version of the conference paper in PDAA2017.}
}
\author{Sayaka Kamei\thanks{Dept. of Information Engineering, Graduate School of Engineering, Hiroshima University, s-kamei@se.hiroshima-u.ac.jp}\and
Hirotsugu Kakugawa\thanks{Dept. of Computer Science, Graduate School of Information Science and Technology, Osaka University, kakugawa@ist.osaka-u.ac.jp}}
\date{}
\begin{document}
\maketitle

\begin{abstract}
This paper considers the global $(l,k)$-CS problem which is the problem of controlling the system in such a way that, at least $l$ and at most $k$ processes must be in the CS at a time in the network. 
In this paper, a distributed solution is proposed in the asynchronous message-passing model.
Our solution is a versatile composition method of algorithms for $l$-mutual inclusion and $k$-mutual exclusion.
Its message complexity is $O(|Q|)$,
where $|Q|$ is the maximum size for the quorum of a coterie 
used by the algorithm, which is typically $|Q| = \sqrt{n}$.
\end{abstract}

\section{Introduction}
The mutual exclusion problem is a fundamental process synchronization problem in concurrent systems \cite{Dijkstra},\cite{mutex-survey},\cite{Review}.
It is the problem of controlling the system in such a way that no two processes execute their critical sections (abbreviated to CSs) at a time.
Various generalized versions of mutual exclusion have been studied extensively, \textit{e.g.}, $k$-mutual exclusion, mutual inclusion, $l$-mutual inclusion.
They are unified to a framework \emph{the critical section problem} in \cite{CSP}.

This paper discusses the global $(l,k)$-CS problem defined as follows.
In the entire network, the global $(l,k)$-CS problem has at least $l$ and at most $k$ processes in the CSs where $0\leq l < k\leq n$.
This problem is interesting not only theoretically but also practically. 
It is a formulation of the dynamic invocation of servers for load balancing. 
The minimum number of servers which are always invoked for quick response to
requests or for fault-tolerance is $l$. 
The number of servers is dynamically changed by system load. 
However, the total number of servers is limited by $k$ to control costs.

This paper is organized as follows.
Section \ref{SEC:Rel} reviews related works.
Section \ref{SEC:PRE} provides several definitions and problem statements.
Section \ref{SEC:ALG} provides the first solution to the global $(l,k)$-CS problem.
This solution uses a solution for the global $l$-mutual inclusion provided in Section \ref{SEC:LMUTIN} as a gadget.
In section \ref{SEC:DIS}, we discuss our concrete algorithm for the global $(l,k)$-CS problem.
In section \ref{SEC:CON}, we give a conclusion and discuss future works.

\section{Related Work}
\label{SEC:Rel}

The $k$-mutual exclusion problem is controlling the system in such a way that at most $k$ processes can execute their CSs at a time.
The $k$-mutual exclusion has been studied actively, and algorithms for this problem are proposed in, for example, 
\cite{k-coterie},\cite{bulgannawar1995distributed},\cite{chang2001generalized},\cite{SSkmutex},\cite{chaudhuri2008algorithm},\cite{Fair2008}.
However, most of them use a specialized quorum system for $k$-mutual exclusion, like $k$-coterie.

The mutual inclusion problem is the complement of the
mutual exclusion problem; unlike mutual exclusion, where at most one process is in the CS, mutual inclusion places at least one process in the CS.
Algorithms for this problem are proposed in \cite{origin-mutin} and \cite{MUTIN}.

The $l$-mutual inclusion problem is the complement of the $k$-mutual exclusion problem; $l$-mutual inclusion places at least $l$ processes in the CSs.
For this problem, to the best of our knowledge, there is no algorithm.
However, the following complementary theorem is shown in \cite{CSP}. 
\begin{theorem}\textnormal{(Complementary Theorem)}\label{comp}~
Let ${\cal A^G}_{(l,k)}$ be an algorithm for the global $(l,k)$-CS problem, Co-${\cal A^G}_{(l,k)}$ be a complement algorithm of ${\cal A^G}_{(l,k)}$, which is obtained by swapping the process states, \emph{in the CS} and \emph{out of the CS}.
Then, Co-${\cal A^G}_{(l,k)}$ is an algorithm for the global $(n-k, n-l)$-CS problem.
\end{theorem}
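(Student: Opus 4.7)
The plan is to view the complement operation as a bijection on executions: each execution of Co-$\mathcal{A}^G_{(l,k)}$ corresponds, via the state swap, to a unique execution of $\mathcal{A}^G_{(l,k)}$ and vice versa. Since the underlying message-passing behaviour of the two algorithms is identical and only the labelling of the two CS-related states is swapped, if $c(t)$ denotes the number of processes in the CS at time $t$ under $\mathcal{A}^G_{(l,k)}$ in the original execution, then the number of processes in the CS at time $t$ under Co-$\mathcal{A}^G_{(l,k)}$ in the swapped execution is exactly $n - c(t)$.

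From there, the safety part is immediate. First I would invoke the defining property of $\mathcal{A}^G_{(l,k)}$ to obtain $l \le c(t) \le k$ at every time $t$. Subtracting from $n$ reverses the inequalities and gives $n - k \le n - c(t) \le n - l$, which is precisely the safety condition for the global $(n-k, n-l)$-CS problem. So the bounds translate cleanly under complementation.

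For liveness I would argue the analogous bijection at the request level. What is a request to enter the CS in Co-$\mathcal{A}^G_{(l,k)}$ corresponds, under the swap, to a request to leave the CS in $\mathcal{A}^G_{(l,k)}$, and vice versa; since $\mathcal{A}^G_{(l,k)}$ guarantees that every enter-request is eventually satisfied and that no process is forced to stay in the CS forever (otherwise it could not allow other processes to enter under the $\le k$ bound), both directions of requests are eventually honoured, and this property transfers across the swap. I expect this liveness correspondence to be the main obstacle, because it depends on exactly which progress properties are bundled into the definition of an algorithm for the $(l,k)$-CS problem in Section~\ref{SEC:PRE}; once those are spelled out, the argument is a symmetric re-reading of the original guarantees with ``enter'' and ``exit'' interchanged, and the theorem follows.
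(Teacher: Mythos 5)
The paper itself offers no proof of this theorem: it is imported verbatim from reference \cite{CSP} (``the following complementary theorem is shown in \cite{CSP}''), so there is no in-paper argument to compare yours against. Judged on its own terms, your proposal is correct. The bijection-on-executions framing is exactly the right observation --- Co-$\mathcal{A}^G_{(l,k)}$ runs the same code with the two state labels exchanged, so every execution of one algorithm is literally an execution of the other read through the relabelling, and per configuration $|\mathcal{CS}_{\mbox{\scriptsize Co-}\mathcal{A}}(C)| = n - |\mathcal{CS}_{\mathcal{A}}(C)|$, from which the safety bounds $n-k \le n - c(t) \le n-l$ follow by reversing the inequalities. Your worry that liveness might be the sticking point dissolves once you look at how the paper defines it: liveness for the global CS problem is ``each process changes \textsf{OutCS} and \textsf{InCS} states alternately infinitely often,'' a condition that is manifestly invariant under exchanging the two labels, so no request-level analysis of which progress properties are ``bundled in'' is needed --- the property transfers verbatim. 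Two small points worth making explicit if you write this up: (i) the complement problem is well-posed because $0 \le l < k \le n$ implies $0 \le n-k < n-l \le n$; and (ii) the swap also exchanges the roles of Entry() and Exit() (as the paper notes just after the theorem statement), which is what makes the CSMIC preconditions line up, but this does not affect the counting argument.
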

By this theorem, if we have an algorithm for $(n-l)$-mutual exclusion, then
we can transform it to an algorithm for $l$-mutual inclusion.
Then, Exit() (\textit{resp}. Entry()) method of $l$-mutual inclusion can make from Entry() (\textit{resp}. Exit()) method of $(n-l)$-mutual exclusion by swapping the process states.

In \cite{Algorithms}, an algorithm is proposed for the local version of $(l,k)$-CS problem.
The global CS problem is a special case of the local CS problem when the network topology is complete.
Thus, we can use the algorithm in \cite{Algorithms} as the algorithm for the global CS problem.
However, the message complexity of \cite{Algorithms} is $O(\Delta)$, where $\Delta$ is the maximum degree, as the algorithm for the local CS problem.
That is, because the maximum degree is $n$ for the global CS problem, the message complexity of \cite{Algorithms} is $O(n)$ as the algorithm for the global CS problem.

\section{Preliminary}
\label{SEC:PRE}

Let $G=(V,E)$ be a graph, where
$V=\{P_1,P_2,...,P_{n}\}$ is a set of processes and 
$E \subseteq V \times V$ is a set of bidirectional communication links
between a pair of processes.
We assume that $(P_i,P_j) \in E$ if and only if $(P_j,P_i) \in E$.
Each communication link is FIFO.
We consider that $G$ is a distributed system.
The number of processes in $G=(V,E)$ is denoted by $n (= |V|)$.
We assume that the distributed system is asynchronous, \textit{i.e.},
there is no global clock. 
A message is delivered eventually but there is no upper bound 
on the delay time and the running speed of a process may vary.

Below we present the \emph{critical section class}
which defines a common interface for
algorithms that solves a CS problem,
including $(l,k)$-CS problem, mutual exclusion, mutual inclusion, $k$-mutual exclusion and $l$-mutual inclusion.
\begin{definition}
A \emph{critical section object}, say $o$, is a distributed object (algorithm) shared by processes for coordination of accessing the critical section.
Each process has a local variable which is a reference to the object.
A class of critical section objects is called the \emph{critical section class}.
The critical section class has the following member variable and methods.
\begin{itemize*}
\item $o.{\it state}_i \in \{{\sf InCS}, {\sf OutCS}\}$ : the state of $P_i$.
\item $o$.\textnormal{Exit()} : a method to change its state from {\sf InCS} to {\sf OutCS}.
\item $o$.\textnormal{Entry()} : a method to change its state from {\sf OutCS} to {\sf InCS}.
\end{itemize*}
Each critical section object guarantees safety and liveness for accessing the critical section if critical section method invocation convention
(CSMIC), which is defined below, for object $o$ is confirmed globally.
\end{definition}
\begin{definition}
For any given process $P_i$, we say that \emph{critical section method invocation convention
(CSMIC) for object $o$ at $P_i$ is confirmed}
if and only if the following two conditions are satisfied at $P_i$.
\begin{itemize*}
\item $o$.\textnormal{Exit()} is invoked only when $o$.${\it state}_i = {\sf InCS}$ holds.
\item $o$.\textnormal{Entry()} is invoked only when $o$.${\it state}_i = {\sf OutCS}$ holds.
\end{itemize*}
\end{definition}
\begin{definition}
We say that \emph{critical section method invocation convention
(CSMIC) for object $o$ is confirmed globally}
if and only if
critical section method invocation convention
(CSMIC) for object $o$ at $P_i$ is confirmed for each $P_i \in V$.
\end{definition}

For each critical section object $o$, the vector of local states 
$(o.{\it state}_1,o.{\it state}_2,$ $\dots,o.{\it state}_n)$ of all processes forms a \emph{configuration} (global state) of a distributed system.
For each configuration $C$ for object $o$, let ${\cal CS}_o(C)$ be the set of processes $P_i$ with $o.\textit{state}_i={\sf InCS}$ in $C$.
Under each object $o$, the behaviour of each process $P_i$ is as follows, 
where we assume that, when $o.{\it state}_i$ is {\sf OutCS} (\textit{resp}. {\sf InCS}), $P_i$ eventually invokes $o$.Entry() (\textit{resp}. $o$.Exit()) and changes its state into {\sf InCS} (\textit{resp}. {\sf OutCS}).
\begin{tabbing}
\quad\=\quad\=\quad\=\quad\=\kill
\>/* $o.\textit{state}_i = (\mbox{Initial state of}~ P_i~ \mbox{in the initial configuration}) */$\\
\>$\textbf{while}~\textbf{true}~\{$\\
\>\>$\textbf{if}~(o.\textit{state}_i={\sf OutCS})~\{$\\
\>\>\>$o$.Entry();\\
\>\>\>\>/* $o.\textit{state}_i = \textsf{InCS}$ */\\
\>\>$\}$\\
\>\>$\textbf{if}~(o.\textit{state}_i={\sf InCS})~\{$\\
\>\>\>$o$.Exit();\\
\>\>\>\>/* $o.\textit{state}_i = \textsf{OutCS}$ */\\
\>\>$\}$\\
\>$\}$
\end{tabbing}

\begin{definition}(The global critical section problem).
Assume that a pair of numbers $l$ and $k$ $(0\leq l < k\leq n)$ is given on network $G=(V, E)$.
Then, an object $(l,k)$-GCS solves \emph{the global critical section problem} on $G$ if and only if the following two conditions hold in each configuration $C$.
\begin{itemize*}
\item Safety: $l\leq|{\cal CS}_{(l,k)\textnormal{-GCS}}(C)|\leq k$ at any time. 
\item Liveness: Each process $P_i\in V$ changes {\sf OutCS} and {\sf InCS} states alternately infinitely often.
\end{itemize*} 
\end{definition}
For given $l$ and $k$, we call the global CS problem as \emph{the global $(l,k)$-CS problem}.

We assume that, for object $(l,k)${\it -GCS} which is for the global $(l,k)$-CS problem, the initial configuration $C_0$ is safe, that is, $C_0$ satisfies $l\leq|{\cal CS}_{(l,k)\textnormal{-GCS}}(C_0)|\leq k$.
Note that, existing works for CS problems assume that their initial configurations are safe.
For example, for the mutual exclusion problem, most algorithms assume that each process is
in {\sf OutCS} state initially, and some algorithms (\textit{e.g.}, token based algorithms) assume that exactly one process is in {\sf InCS} state and other processes are in {\sf OutCS} state initially.
Hence our assumption for the initial configuration is a natural generalization of existing algorithms.

The typical performance measures applied to algorithms for the CS problem
are as follows.
\begin{itemize*}
\item \textit{Message complexity}: the number of message
  exchanges triggered by a pair of invocations of
  Exit() and Entry().
\item \textit{Waiting time\footnote{
     The name of this performance measure differs among previous studies and
     some (\textit{e.g.}, \cite{mutex-survey})
     refer to this performance measure as the \emph{synchronization delay}.   
   } for exit ({\it resp.} entry)}: the time period between the invocation of the Exit() ({\it resp.} Entry()) and
completion of the exit from ({\it resp.} entry to) the CS.
\item \textit{Waiting time}: the maximum one of the waiting times for exit or entry.
\end{itemize*}

Our proposed algorithm uses a coterie \cite{garcia85} 
for information exchange between processes.

\begin{definition}~\textnormal{(Coterie \cite{garcia85})}~
A \emph{coterie} ${\cal C}$ under a set $V$ is a set of subsets of $V$,
i.e., ${\cal C}=\{Q_1, Q_2, ... \}$, where $Q_i \subseteq V$ 
and it satisfies the following two conditions.
\begin{enumerate}
\item Intersection property: 
For any $Q_i,Q_j \in {\cal C}$, $Q_i \cap Q_j \ne \emptyset$ holds.
\item Minimality: 
For any distinct $Q_i,Q_j \in {\cal C}$, $Q_i \not\subseteq Q_j$ holds.
\end{enumerate}
Each member $Q_i \in {\cal C}$ is called a \emph{quorum}.
\end{definition}
We assume that, for each $P_i$, $Q_i$ is defined as a constant and is a quorum used by $P_i$. 

The algorithm proposed by
\cite{maekawa85} is a distributed mutual exclusion algorithms
that uses a coterie and it achieves a message complexity of $O(|Q|)$,
where $|Q|$ is the maximum size of the quorums in a coterie.
For example, 
the finite projective plane coterie
and the grid coterie 
achieve $|Q| = O(\sqrt{n})$, 
where $n$ is the total number of processes \cite{maekawa85}.

\section{Proposed Algorithm}
\label{SEC:ALG}

In this section, we propose a distributed algorithm for $(l,k)$-CS problem based on algorithms for $l$-mutual inclusion and $k$-mutual exclusion.
Our algorithm $(l,k)${\it -GCS} is a composition of two objects, {\it lmin} and {\it kmex}.
{\it lmin} is an algorithm for $l$-mutual inclusion, and {\it kmex} is an algorithm for $k$-mutual exclusion.
The algorithm $(l,k)${\it -GCS}
for each process $P_i \in V$ is presented in Algorithm 1.
In $(l,k)${\it -GCS}, we regards that each process state changes into {\sf OutCS} ({\it resp.} {\sf InCS}) immediately in $(l,k)${\it -GCS}.Exit() ({\it resp.} $(l,k)${\it -GCS}.Entry()), just after execution of {\it lmin}.Exit() ({\it resp.} {\it kmex}.Entry()), before execution of {\it kmex}.Exit() ({\it resp.} {\it lmin}.Entry()).
We assume that, for each $P_i$, $\mbox{$(l,k)${\it -GCS}}.{\it state}_i=\mbox{{\it lmin}}.{\it state}_i=\mbox{{\it kmex}}.{\it state}_i$ holds in the initial configuration.

In $(l,k)${\it -GCS}, safety is maintained by {\it lmin}.Exit() and {\it kmex}.Entry() because objects {\it lmin} and {\it kmex} guarantee each of their safety properties by these methods.
That is, {\it lmin}.Exit() blocks if $l$ processes are {\sf InCS}, and
{\it kmex}.Entry() blocks if $k$ processes are {\sf InCS}.

\begin{algorithm}[t]
\caption{$(l,k)${\it -GCS}}
\begin{tabbing}
\quad\=\quad\=\quad\=\quad\=\quad\=\quad\=\quad\=\quad\=\quad\=\kill
\noindent Local Variables: \\
\> {\it lmin} : \textbf{critical section object for $l$-mutual inclusion};\\
\> {\it kmex} : \textbf{critical section object for $k$-mutual exclusion};\\
\\
\noindent{Exit():} \\
\>\>\>  /* $\textit{state}_i = \textsf{InCS}$ */\\
\>  {\it lmin}.Exit(); ~/* Request */ \\
\>\>\>  /* $\textit{state}_i = \textsf{OutCS}$ */\\
\>  {\it kmex}.Exit(); ~/* Release */\\
\\

\noindent{Entry():} \\
\>\>\>  /* $\textit{state}_i = \textsf{OutCS}$ */\\
\>  {\it kmex}.Entry(); ~/* Request */ \\
\>\>\>  /* $\textit{state}_i = \textsf{InCS}$ */\\
\>  {\it lmin}.Entry(); ~/* Release */
\end{tabbing}
\end{algorithm}

\subsection{Proof of correctness of algorithm $(l,k)${\it -GCS}}
\label{SUBSEC:GCS:PROOF}

For each $P_i$, let $\#G_i$ ({\it resp.} $\#L_i, \#K_i$) be $1$ if $(l,k)${\it -GCS}.${\it state}_i={\sf InCS}$ ({\it resp.} {\it lmin}.${\it state}_i={\sf InCS}$, {\it kmex}.${\it state}_i={\sf InCS}$) holds, otherwise $0$.
Additionally, let $\#G$ ({\it resp.} $\#L, \#K$) be $\sum_{P_i} \#G_i$ ({\it resp.} $\sum_{P_i} \#L_i, \sum_{P_i} \#K_i$).
That is, $\#G={\cal CS}_{(l,k)\textnormal{-GCS}}(C)$ ({\it resp.} $\#L={\cal CS}_{{\it lmin}}(C), \#K={\cal CS}_{{\it kmex}}(C)$) in a configuration $C$.
Then, $l\leq \#L\leq n$ holds by the safety of {\it lmin}, and $0\leq \#K \leq k$ holds by the safety of {\it kmex}.
Because we assume that the initial configuration $C_0$ is safe, $l\leq \#G\leq k$ holds in $C_0$.

\begin{lemma}\label{Initial}
In the initial configuration $C_0$, {\it lmin} and {\it kmex} satisfy their safety properties.
\end{lemma}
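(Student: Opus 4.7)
The plan is to unfold the definitions and use the state-alignment assumption for the initial configuration. First I would recall the two pieces of data available at $C_0$: (i) by assumption, for every $P_i$ we have $(l,k)${\it -GCS}.${\it state}_i = {\it lmin}.{\it state}_i = {\it kmex}.{\it state}_i$, and (ii) by the standing assumption on $C_0$, the $(l,k)${\it -GCS} safety condition holds, i.e.\ $l \le \#G \le k$.

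Next I would translate (i) into the counting notation set up just before the lemma: since the three state flags agree process-by-process in $C_0$, we have $\#G_i = \#L_i = \#K_i$ for every $P_i$, and summing over $V$ gives $\#G = \#L = \#K$ in $C_0$. Then I would combine this identity with (ii): from $\#L = \#G \ge l$ we get the safety condition required of {\it lmin} (at least $l$ processes {\sf InCS}), and from $\#K = \#G \le k$ we get the safety condition required of {\it kmex} (at most $k$ processes {\sf InCS}). These are exactly the two safety properties of the gadget objects, so both hold in $C_0$.

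There isn't really a hard step here: the lemma is essentially a restatement of the initial-configuration assumption plus the state-synchronization convention. The only thing to be careful about is that I am invoking the assumption stated in the preceding paragraph (equality of the three state variables at $C_0$); without it, nothing prevents, say, ${\it lmin}.{\it state}_i = {\sf InCS}$ while $(l,k)${\it -GCS}.${\it state}_i = {\sf OutCS}$, which could break the counts. I would therefore make the use of that assumption explicit at the start of the proof, and otherwise keep the argument to the short chain $\#G = \#L = \#K$ together with $l \le \#G \le k$.
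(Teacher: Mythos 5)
Your proof is correct and follows essentially the same route as the paper: use the state-alignment assumption at $C_0$ to get $\#G_i = \#L_i = \#K_i$ pointwise, sum to obtain $\#G = \#L = \#K$, and combine with $l \le \#G \le k$ to conclude the safety of {\it lmin} and {\it kmex}. The only cosmetic difference is that you extract just the needed inequality for each object ($\#L \ge l$ and $\#K \le k$) while the paper states both bounds for both objects; this changes nothing.
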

\begin{proof}
In $C_0$, because
$(l,k)\mbox{-{\it GCS}}.{\it state}_i=\mbox{{\it lmin}}.{\it state}_i=\mbox{{\it kmex}}.{\it state}_i$ holds for each $P_i$, $\#G_i = \#L_i = \#K_i$ holds.
Hence, $\sum_{P_i} \#G_i=\sum_{P_i} \#L_i=\sum_{P_i} \#K_i$ holds.
Thus, $\#G=\#L=\#K$ holds.
Because $l\leq \#G\leq k$ holds in $C_0$, $l\leq \#L\leq k$ and $l\leq \#K \leq k$ holds in $C_0$.
Thus, {\it lmin} and {\it kmex} satisfy their safety in $C_0$.\qed
\end{proof}

\begin{lemma}\label{CSMIC}
In any execution of $(l,k)${\it -GCS}, 
CSMIC for {\it lmin} and {\it kmex} are confirmed globally.
\end{lemma}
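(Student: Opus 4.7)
The plan is to prove, by induction on the number of invocations of $(l,k)${\it -GCS}.Entry() and $(l,k)${\it -GCS}.Exit() completed at $P_i$, the stronger invariant that at any moment when $P_i$ is not currently executing one of these two methods, $\mbox{{\it lmin}}.{\it state}_i = \mbox{{\it kmex}}.{\it state}_i = \mbox{$(l,k)${\it -GCS}}.{\it state}_i$ holds. Once this invariant is established, CSMIC for {\it lmin} and {\it kmex} at $P_i$ (and hence globally, by quantifying over $i$) follows by inspecting the four method invocations in Algorithm~1 one at a time.

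For the base case, I would invoke the paper's explicit assumption that in $C_0$ we have $\mbox{$(l,k)${\it -GCS}}.{\it state}_i = \mbox{{\it lmin}}.{\it state}_i = \mbox{{\it kmex}}.{\it state}_i$ for every $P_i$. For the inductive step, I would consider the two possible outer transitions driven by the {\bf while true} loop of the behaviour template (which enforces that outer calls are issued only when $(l,k)${\it -GCS}.${\it state}_i$ has the matching value, so that CSMIC for $(l,k)${\it -GCS} itself is maintained). Take $(l,k)${\it -GCS}.Entry(), invoked when $(l,k)${\it -GCS}.${\it state}_i = {\sf OutCS}$; by the inductive hypothesis both $\mbox{{\it kmex}}.{\it state}_i$ and $\mbox{{\it lmin}}.{\it state}_i$ equal ${\sf OutCS}$ at this point. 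The first line calls {\it kmex}.Entry(), whose CSMIC precondition $\mbox{{\it kmex}}.{\it state}_i = {\sf OutCS}$ is exactly what we have; when it returns, $\mbox{{\it kmex}}.{\it state}_i = {\sf InCS}$, and per the convention in the paper $(l,k)${\it -GCS}.${\it state}_i$ is set to ${\sf InCS}$ immediately after. The second line calls {\it lmin}.Entry(), and its CSMIC precondition $\mbox{{\it lmin}}.{\it state}_i = {\sf OutCS}$ still holds because {\it lmin} was not touched inside this method; after it returns $\mbox{{\it lmin}}.{\it state}_i = {\sf InCS}$. Thus on exit of the outer method all three states equal ${\sf InCS}$, restoring the invariant. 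The symmetric argument handles $(l,k)${\it -GCS}.Exit(): {\it lmin}.Exit() is called first with $\mbox{{\it lmin}}.{\it state}_i = {\sf InCS}$ (valid by hypothesis), the global state is flipped to ${\sf OutCS}$, then {\it kmex}.Exit() is called with $\mbox{{\it kmex}}.{\it state}_i = {\sf InCS}$ still valid since {\it kmex} was untouched.

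The only subtlety, and what I would call out as the main thing to get right, is the order of state updates inside each compound method: the paper stipulates that $(l,k)${\it -GCS}.${\it state}_i$ flips between the two inner calls, and one must check that the second inner call's CSMIC precondition refers to the inner object's state, which has not yet been changed at that point, rather than to $(l,k)${\it -GCS}.${\it state}_i$. Once this is articulated carefully, each of the four inner invocations is verified in one line and the induction closes. Finally, since the argument is entirely local to $P_i$, CSMIC for {\it lmin} and {\it kmex} is confirmed at every $P_i \in V$, i.e., globally, which is what the lemma asserts. \qed
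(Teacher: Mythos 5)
Your proposal is correct and follows essentially the same route as the paper: both establish, by induction from the assumption that all three states coincide in $C_0$, the invariant that $(l,k)${\it -GCS}.${\it state}_i$ agrees with the inner objects' states at each outer invocation, and then derive CSMIC for {\it lmin} and {\it kmex} at each $P_i$ from CSMIC for $(l,k)${\it -GCS}. The only cosmetic difference is that you carry both inner objects through one induction while the paper argues for {\it lmin} and declares {\it kmex} symmetric; your explicit remark that the second inner call's precondition concerns the as-yet-untouched inner state is a welcome clarification but not a departure.
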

\begin{proof}
Let $P_i$ be any process.
Because $(l,k)${\it -GCS}.${\it state}_i$ alternates by invocations of
$(l,k)${\it -GCS}.Exit() and $(l,k)${\it -GCS}.Entry(), CSMIC for $(l,k)${\it -GCS} is confirmed at $P_i$.
We show that CSMIC for {\it lmin} and {\it kmex} are also confirmed at $P_i$.
Below we show only the case of {\it lmin};
we omit the case for {\it kmex} because it is shown similarly.

First,
we show that an invariant $(l,k)${\it -GCS}.${\it state}_i = \mbox{{\it lmin}}.{\it state}_i$ holds
whenever $(l,k)${\it -GCS}.Exit() and $(l,k)${\it -GCS}.Entry() are just invoked.

In $C_0$, it is assumed that
$\mbox{$(l,k)${\it -GCS}}.{\it state}_i = \mbox{{\it lmin}}.{\it state}_i$ holds.
Hence the invariant holds.

We assume that
$(l,k)${\it -GCS}.${\it state}_i = \mbox{{\it lmin}}.{\it state}_i$ holds
when $(l,k)${\it -GCS}.Exit() and $(l,k)${\it -GCS}.Entry() are invoked.
\begin{itemize*}
\item When $(l,k)${\it -GCS}.Exit() is invoked,
  we have $\mbox{$(l,k)${\it -GCS}}.{\it state}_i = \mbox{{\it lmin}}.{\it state}_i = {\sf InCS}$
  at the beginning of invocation.
  Then, $P_i$ invokes {\it lmin}.Exit() with $\mbox{{\it lmin}}.{\it state}_i = {\sf InCS}$.
  When this invocation finishes, we have
  $\mbox{$(l,k)${\it -GCS}}.{\it state}_i = \mbox{{\it lmin}}.{\it state}_i = {\sf OutCS}$.
\item When $(l,k)${\it -GCS}.Entry() is invoked,
  we have $\mbox{$(l,k)${\it -GCS}}.{\it state}_i = \mbox{{\it lmin}}.{\it state}_i = {\sf OutCS}$
  at the beginning of invocation.
  Then, $P_i$ invokes {\it lmin}.Entry() with $\mbox{{\it lmin}}.{\it state}_i = {\sf OutCS}$.
  When this invocation finishes, we have
  $\mbox{$(l,k)${\it -GCS}}.{\it state}_i = \mbox{{\it lmin}}.{\it state}_i = {\sf InCS}$.
\end{itemize*}
Hence, any invocation of $(l,k)${\it -GCS}.Exit() and $(l,k)${\it -GCS}.Entry()
maintains the invariant.

Now, we show that
CSMIC for {\it lmin} is confirmed at $P_i$.
Because CSMIC for $(l,k)${\it -GCS} is confirmed at $P_i$,
$(l,k)${\it -GCS}.Exit() is invoked only when $(l,k)${\it -GCS}.${\it state}_i = {\sf InCS}$ holds, and
$(l,k)${\it -GCS}.Entry() is invoked only when $(l,k)${\it -GCS}.${\it state}_i = {\sf OutCS}$ holds.
Because of the invariant,
{\it lmin}.Exit() is invoked only when $\mbox{{\it lmin}}.{\it state}_i = {\sf InCS}$ holds, and
{\it lmin}.Entry() is invoked only when {\it lmin}.${\it state}_i = {\sf OutCS}$ holds.
Hence CSMIC for {\it lmin} is confirmed at $P_i$.

Because CSMIC for {\it lmin} is confirmed at $P_i$ for each $P_i$,
CSMIC for {\it lmin} is confirmed globally.
\qed
\end{proof}

\begin{lemma}
In any execution of $(l,k)${\it -GCS},
{\it lmin} and {\it kmex} satisfy safety and liveness properties.
\end{lemma}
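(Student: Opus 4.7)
The plan is to reduce the statement directly to the contractual guarantee built into the critical section class: any critical section object delivers safety and liveness provided its CSMIC is confirmed globally and its initial configuration lies in its safe region. Since both hypotheses have already been established in the preceding two lemmas, the proof is essentially a citation.

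First I would cite Lemma \ref{CSMIC} to conclude that CSMIC for {\it lmin} and CSMIC for {\it kmex} are both confirmed globally throughout every execution of $(l,k)${\it -GCS}. Next I would cite Lemma \ref{Initial} to note that in $C_0$ the component configurations satisfy $l \leq \#L \leq k$ and $l \leq \#K \leq k$, so each of {\it lmin} and {\it kmex} is started from a state consistent with its own safety requirement (at least $l$ processes {\sf InCS} for {\it lmin}, at most $k$ for {\it kmex}). Having these two ingredients, the safety and liveness guarantees that the definition of critical section object attaches to {\it lmin} and to {\it kmex} apply verbatim, and the lemma follows.

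The main subtlety to check — and the only place the argument could go wrong — is that inside $(l,k)${\it -GCS}.Exit() and $(l,k)${\it -GCS}.Entry() the two component invocations are sequential and not atomic, so between the call to the first component's method and the call to the second's the shared variable $\mbox{$(l,k)${\it -GCS}}.{\it state}_i$ is momentarily inconsistent with one of the component states. The point I would stress is that neither {\it lmin} nor {\it kmex} ever inspects the other's state, and Lemma \ref{CSMIC} proves that each component object is only ever invoked in a locally consistent state; hence the composition does not violate either component's internal contract, and the appeal to the critical section class's guarantee is valid.
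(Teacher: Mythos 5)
Your proposal matches the paper's own proof: both cite Lemma~\ref{Initial} for safety of the initial configuration and Lemma~\ref{CSMIC} for global confirmation of CSMIC, then invoke the guarantee attached to the critical section class. Your extra remark about the non-atomicity of the composed methods is a reasonable elaboration of why Lemma~\ref{CSMIC} suffices, but it does not change the argument.
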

\begin{proof}
By lemma~\ref{Initial}, in $C_0$, {\it lmin} and {\it kmex} satisfy their safety properties
because $\mbox{$(l,k)${\it -GCS}}.{\it state}_i=\mbox{{\it lmin}}.{\it state}_i=\mbox{{\it kmex}}.{\it state}_i$ holds for each $P_i$.
By lemma~\ref{CSMIC}, CSMIC for {\it lmin} and {\it kmex} are confirmed globally.
Because they are preconditions for the safety and liveness of
{\it lmin} and {\it kmex}, the lemma holds.\qed
\end{proof}

\begin{lemma}\label{LEM:GCS:SAFETY}
\textnormal{(Safety)}~
The number of processes in {\sf InCS} state is at least $l$ and at most $k$ at any time.
\end{lemma}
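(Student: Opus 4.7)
The plan is to reduce the bound on $\#G$ to the already-established safety bounds on $\#L$ and $\#K$ (guaranteed by the previous lemma, which shows that $lmin$ and $kmex$ satisfy safety along the execution) by proving a per-process sandwich inequality $\#L_i \le \#G_i \le \#K_i$ that holds at every point in time, then summing.

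First I would examine every reachable local configuration of a single process $P_i$ relative to the three state variables $(l,k)\textit{-GCS}.\textit{state}_i$, $lmin.\textit{state}_i$, $kmex.\textit{state}_i$. Using the specification of Algorithm 1 together with the convention that $(l,k)\textit{-GCS}.\textit{state}_i$ flips to \textsf{OutCS} the instant $lmin.\textnormal{Exit()}$ returns, and to \textsf{InCS} the instant $kmex.\textnormal{Entry()}$ returns, there are exactly four possible triples $(\#L_i,\#G_i,\#K_i)$: the two resting states $(0,0,0)$ and $(1,1,1)$, the transient state $(0,0,1)$ reached after $lmin.\textnormal{Exit()}$ has returned but before $kmex.\textnormal{Exit()}$ has been executed, and the transient state $(0,1,1)$ reached after $kmex.\textnormal{Entry()}$ has returned but before $lmin.\textnormal{Entry()}$ has been executed. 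In all four triples, $\#L_i \le \#G_i \le \#K_i$.

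Next I would sum this inequality over all $P_i \in V$ to obtain $\#L \le \#G \le \#K$ in every configuration. Combining this with the previous lemma, which gives $\#L \ge l$ (safety of $lmin$) and $\#K \le k$ (safety of $kmex$) at every time, yields $l \le \#L \le \#G \le \#K \le k$, which is exactly the claim since $\#G = |{\cal CS}_{(l,k)\textnormal{-GCS}}(C)|$.

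The only subtle point, and the step I would take the most care over, is the case analysis establishing the four-case enumeration of $(\#L_i,\#G_i,\#K_i)$. It must be consistent with the convention stated just before Section~\ref{SUBSEC:GCS:PROOF}, namely that the update of $(l,k)\textit{-GCS}.\textit{state}_i$ is treated as happening simultaneously with the return of the inner $lmin.\textnormal{Exit()}$ or $kmex.\textnormal{Entry()}$ call, with no intermediate configuration in which $\#G_i$ disagrees with both of the other two. The invariant $(l,k)\textit{-GCS}.\textit{state}_i = lmin.\textit{state}_i$ at Exit/Entry entry points, already proved inside Lemma~\ref{CSMIC}, plus its analogue for $kmex$, is what rules out any spurious fifth case and makes the four-case enumeration exhaustive.
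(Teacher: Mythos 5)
Your proposal is correct and matches the paper's own argument essentially step for step: the paper likewise traces the four reachable triples of $(\#G_i,\#L_i,\#K_i)$ through Exit() and Entry() to obtain the per-process invariant $\#L_i \leq \#G_i \leq \#K_i$, sums over all processes, and concludes $l \leq \#L \leq \#G \leq \#K \leq k$ from the safety of \textit{lmin} and \textit{kmex}. The only difference is presentational (the ordering of the triple and your explicit appeal to the invariant from Lemma~\ref{CSMIC} to justify exhaustiveness of the case enumeration).
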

\begin{proof}
By the definition of $(l,k)${\it -GCS}, CSMIC for $(l,k)${\it -GCS} is confirmed globally, and $(l,k)\mbox{-GCS}.{\it state}_i =\mbox{{\it lmin}}.{\it state}_i =\mbox{{\it kmex}}.{\it state}_i$ in $C_0$, we have $\#G_i = \#L_i = \#K_i$ in $C_0$.
Thus, 
each value of $\#G_i$, $\#L_i$ and $\#K_i$ in each point of the execution is as follows.
\begin{tabbing}
\qquad\=\qquad\=\quad\=\quad\=\quad\=\quad\=\quad\=\quad\=\quad\=\kill
\noindent{$(l,k)${\it -GCS}.Exit():} \\
\>\>\>  // $(\#G_i, \#L_i, \#K_i) = (1,1,1)$\\
\>  {\it lmin}.Exit();\\
\>\>\>  // $(\#G_i, \#L_i, \#K_i) = (0,0,1)$\\
\>  {\it kmex}.Exit();\\
\>\>\>  // $(\#G_i, \#L_i, \#K_i) = (0,0,0)$\\
\\

\noindent{$(l,k)${\it -GCS}.Entry():} \\
\>\>\>  // $(\#G_i, \#L_i, \#K_i) = (0,0,0)$\\
\>  {\it kmex}.Entry(); \\
\>\>\>  // $(\#G_i, \#L_i, \#K_i) = (1,0,1)$\\
\>  {\it lmin}.Entry();\\
\>\>\>  // $(\#G_i, \#L_i, \#K_i) = (1,1,1)$
\end{tabbing}
Therefore, the following invariant $\#G_i \geq \#L_i \land \#G_i \leq \#K_i$ is satisfied.

Because $\#G = \sum_{P_i} \#G_i \geq \sum_{P_i} \#L_i = \#L$
and $\#G = \sum_{P_i} \#G_i \leq \sum_{P_i} \#K_i = \#K$ hold,
we have invariants $\#G  \geq \#L$ and $\#G \leq \#K$.
Because $\#G \geq \#L \geq l$ and $\#G \leq \#K \leq k$ holds by the safety of 
{\it lmin} and {\it kmex},
$l \leq \#G \leq k$ holds.
\qed
\end{proof}
\begin{lemma}\label{LEM:GCS:LIVENESS}
\textnormal{(Liveness)}~
Each process $P_i \in V$ alternates its state infinitely often.
\end{lemma}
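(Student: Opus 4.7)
The plan is to leverage the previous lemma, which guarantees that both \textit{lmin} and \textit{kmex} satisfy their own safety and liveness properties in any execution of $(l,k)${\it -GCS}, together with the sequential structure of the Exit() and Entry() methods shown in Algorithm~1. Liveness of the composite object should follow from liveness of its two gadgets: every call made by a process to a gadget method terminates, so the outer methods terminate, and therefore each process toggles between \textsf{InCS} and \textsf{OutCS} forever.

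More concretely, I would first observe that by Lemma~\ref{CSMIC} every invocation of {\it lmin}.Entry(), {\it lmin}.Exit(), {\it kmex}.Entry(), {\it kmex}.Exit() at $P_i$ is made in a state satisfying CSMIC, so the liveness guarantees of the two gadgets apply. I would then argue by induction on the number of completed outer calls at $P_i$. Suppose $P_i$ invokes $(l,k)${\it -GCS}.Exit(); then $P_i$ first calls {\it lmin}.Exit(), which by liveness of \textit{lmin} eventually returns, so $P_i$ then calls {\it kmex}.Exit(), which by liveness of \textit{kmex} eventually returns, completing the outer Exit(). The symmetric argument handles $(l,k)${\it -GCS}.Entry() via {\it kmex}.Entry() followed by {\it lmin}.Entry(). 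Since each outer method terminates in finite time and the pseudocode loops forever alternating Entry() and Exit() whenever $o.\textit{state}_i$ permits, $P_i$ changes state infinitely often.

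The only subtlety, and the step I would state most carefully, is that the liveness of the underlying gadgets is phrased for standalone executions, while here they are driven by the composite object. This is exactly what Lemma~\ref{CSMIC} buys us: CSMIC for {\it lmin} and {\it kmex} is confirmed globally at every process, which is the precondition under which the gadgets' liveness guarantees hold; and the safety of each gadget, also established in the previous lemma, ensures that the blocking in {\it lmin}.Exit() (when $\#L=l$) and {\it kmex}.Entry() (when $\#K=k$) is eventually released because other processes are themselves making progress and calling the complementary methods. Once this is noted, the termination of each individual gadget call is immediate from the assumed liveness of \textit{lmin} and \textit{kmex}, and the lemma follows. I do not expect any real obstacle beyond making this appeal to the gadgets' liveness precisely; no new combinatorial argument is needed.
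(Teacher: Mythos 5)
Your proposal has a genuine gap at exactly the point you flag as ``the only subtlety.'' You argue that each gadget call terminates by appealing to the liveness of \textit{lmin} and \textit{kmex}, and you justify the release of a blocked {\it lmin}.Exit() or {\it kmex}.Entry() by saying that ``other processes are themselves making progress and calling the complementary methods.'' That is precisely what has to be proved, and it is not free: in the composition the two gadgets can, in principle, block each other in a circular wait. A process stuck in {\it lmin}.Exit() is waiting for some process to complete {\it lmin}.Entry(), i.e., to become \textsf{InCS}; but to reach {\it lmin}.Entry() a process must first pass {\it kmex}.Entry(), which may itself be blocked waiting for some process to complete {\it kmex}.Exit(); and to reach {\it kmex}.Exit() a process must first pass {\it lmin}.Exit(). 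The liveness guarantee of each gadget in isolation presupposes that processes keep driving the complementary operation of that gadget, so invoking it per-call, as your induction does, is circular. Lemma~\ref{CSMIC} only establishes that the calls are well-formed; it does not rule out this deadlock.

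The paper closes the gap with a counting argument that your proposal lacks. It supposes for contradiction that some process is blocked forever, shows this forces \emph{every} process to be blocked forever in either {\it lmin}.Exit() or {\it kmex}.Entry(), and observes that in such a configuration each blocked process has $(l,k)${\it -GCS}.${\it state}_j=\mbox{\it lmin}.{\it state}_j=\mbox{\it kmex}.{\it state}_j$, hence $\#L=\#K$. By safety, $l\leq\#L$ and $\#K\leq k$, and since $l<k$ either $\#L>l$ (so \textit{lmin}'s liveness unblocks someone in {\it lmin}.Exit()) or $\#L=l$, whence $\#K=l<k$ (so \textit{kmex}'s liveness unblocks someone in {\it kmex}.Entry()); either way a contradiction. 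This case analysis is the combinatorial step you asserted was not needed; without it, the lemma is not established.
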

\begin{proof}
By contrast, suppose that some processes do not change
\textsf{OutCS} and \textsf{InCS} states alternately infinitely often.
Let $X$ be the set of such processes.
In {\it kmex}.Exit() ({\it resp.} {\it lmin}.Entry()) method,
because $P_i$ just releases the right to be in {\sf InCS} ({\it resp.} {\sf OutCS}), the method does not block any process $P_i$ forever. 
Thus, in $(l,k)${\it -GCS}, $P_i$ is blocked only in {\it lmin}.Exit() of $(l,k)${\it -GCS}.Exit() and {\it kmex}.Entry() of $(l,k)${\it -GCS}.Entry().

Consider the case that a process $P_i\in X$ is blocked in $(l,k)${\it -GCS}.Exit() forever.
Note that, we omit the proof of the case in which $P_i$ is blocked in $(l,k)${\it -GCS}.Entry() forever because it is symmetry to the following proof.

If other processes invoke $(l,k)${\it -GCS}.Exit() and $(l,k)${\it -GCS}.Entry() alternately and complete their execution of these methods infinitely often, they complete the execution of {\it lmin}.Exit() and {\it lmin}.Entry() infinitely often.
However, because {\it lmin} satisfies its liveness, $P_i$ is not blocked forever.
Therefore, for the assumption, not only $P_i$ but also all processes must be blocked in $(l,k)${\it -GCS}.Exit() or $(l,k)${\it -GCS}.Entry() forever.
That is, $X=V$ and all processes are blocked in {\it lmin}.Exit() or {\it kmex}.Entry() forever.

Recall that it is assumed that $l \leq \#L \leq n$ holds by the safety of {\it lmin}, and $0\leq \#K \leq k$ holds by the safety of {\it kmex}.
By lemma~\ref{LEM:GCS:SAFETY}, $l\leq \#G\leq k$ holds.
If a process $P_j$ is blocked in {\it lmin}.Exit(), $\mbox{$(l,k)${\it -GCS}}.{\it state}_j=\mbox{{\it lmin}}.{\it state}_j=\mbox{{\it kmex}}.{\it state}_j={\sf InCS}$ holds, and if $P_j$ is blocked in {\it kmex}.Entry(), $\mbox{$(l,k)${\it -GCS}}.{\it state}_j=\mbox{{\it lmin}}.{\it state}_j=\mbox{{\it kmex}}.{\it state}_j={\sf OutCS}$ holds.
Therefore, $\#G=\#L=\#K$ holds.
\begin{itemize*}
\item Consider the case that all processes are blocked in {\it lmin}.Exit().
Then, $\#L=n$ holds.
However, by the assumption that {\it lmin} satisfies its safety, $l=n$ holds.
This is a contradiction because $l<k\leq n$ must hold by assumption.
\item Consider the case that there exists a process which is blocked in {\it kmex}.Entry().
By the assumption that {\it lmin} satisfies its safety, $\#L\geq l$ holds.
\begin{itemize*}
\item Consider the case that $\#L=l$ holds.
Because it is assumed that $l<k$ holds, $\#L<k$ holds, that is, $\#L=\#K<k$ holds.
Because {\it kmex} satisfies its liveness, a process which is blocked in {\it kmex}.Entry() is eventually unblocked.
This is a contradiction by the assumption that all processes are blocked forever.
\item Consider the case that $\#L>l$ holds.
Because {\it lmin} satisfies its liveness, a process which is blocked in {\it lmin}.Exit() is eventually unblocked.
This contradicts the assumption that all processes are blocked forever.\qed
\end{itemize*}
\end{itemize*}
\end{proof}

By lemmas~\ref{LEM:GCS:SAFETY} and \ref{LEM:GCS:LIVENESS}, we derived the following theorem.
\begin{theorem}\label{GCS}
$(l,k)${\it -GCS} solves the global $(l,k)$-CS problem.\qed
\end{theorem}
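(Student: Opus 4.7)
The plan is to derive Theorem~\ref{GCS} directly by matching the two defining clauses of the global $(l,k)$-CS problem against the two statements already proved about the composite object $(l,k)${\it -GCS}. Recall that, by the problem definition, $(l,k)${\it -GCS} solves the global $(l,k)$-CS problem if and only if, in every reachable configuration $C$, (i) $l \leq |{\cal CS}_{(l,k)\textnormal{-GCS}}(C)| \leq k$ and (ii) each process $P_i \in V$ alternates between {\sf OutCS} and {\sf InCS} infinitely often. So the proof reduces to invoking one lemma for each clause.

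First, I would appeal to Lemma~\ref{LEM:GCS:SAFETY} to obtain $l \leq \#G \leq k$ at every point of the execution; by the notation introduced at the start of Section~\ref{SUBSEC:GCS:PROOF}, $\#G$ coincides with $|{\cal CS}_{(l,k)\textnormal{-GCS}}(C)|$, which is exactly clause (i). Next, I would cite Lemma~\ref{LEM:GCS:LIVENESS} to conclude that every $P_i$ alternates states infinitely often, which is clause (ii). Together these two facts supply both the safety and liveness conditions demanded by the definition, so the theorem follows.

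There is essentially no remaining obstacle: the real technical work---maintaining the per-process invariant $\#G_i = \#L_i = \#K_i$ across the {\it lmin} and {\it kmex} calls in Exit() and Entry(), summing to the global inequalities $\#G \geq \#L \geq l$ and $\#G \leq \#K \leq k$, and ruling out the deadlock scenarios under the safety and liveness guarantees of {\it lmin} and {\it kmex}---has already been discharged in Lemmas~\ref{Initial}, \ref{CSMIC}, \ref{LEM:GCS:SAFETY}, and \ref{LEM:GCS:LIVENESS}. Accordingly, I would present the proof as a single sentence that cites the two main lemmas, rather than recapitulating any of their arguments.
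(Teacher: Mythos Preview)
Your proposal is correct and matches the paper's approach exactly: the paper's proof of Theorem~\ref{GCS} is nothing more than the sentence ``By lemmas~\ref{LEM:GCS:SAFETY} and \ref{LEM:GCS:LIVENESS}, we derived the following theorem,'' which is precisely the one-line citation you plan to give. One small remark on your parenthetical recap: the per-process invariant established in Lemma~\ref{LEM:GCS:SAFETY} is $\#G_i \geq \#L_i \wedge \#G_i \leq \#K_i$, not the equality $\#G_i = \#L_i = \#K_i$ (equality fails, e.g., just after {\it lmin}.Exit() in $(l,k)${\it -GCS}.Exit()), but this does not affect your proof of the theorem itself.
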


\section{An Example of the $l$-Mutual Inclusion}\label{SEC:LMUTIN}
Now, to show a concrete algorithm $(l,k)${\it -GCS} based on the discussion in section~\ref{SEC:ALG}, we propose a class ${\it MUTIN}(l)$ for $l$-mutual inclusion.
A formal description of the class ${\it MUTIN}(l)$
for each process $P_i \in V$ is provided in Algorithm 2.

\begin{algorithm}[hp]
\caption{A class description ${\it MUTIN}(l)$ for $l$-mutual inclusion}
\label{al1}
\begin{tabbing}
\quad\=\quad\=\quad\=\quad\=\quad\=\quad\=\quad\=\quad\=\quad\=\kill
\noindent Constants: \\
\> $Q_i : \textbf{set of processIDs}$;\\
\> $R_i : \{P_k ~|~ P_k \in V \land P_i \in Q_k \}, \textbf{set of processIDs}$;\\
\\
\noindent Local Variables: \\
\>  ${\it mx}$ : {\bf critical section object for mutual exclusion};\\ 
\>  $\textit{reqCnt}_i : \textbf{integer},~ 
                         \textbf{initially}~ 0$;\\
\>  $\textit{procsInCS}_i : \textbf{set of processIDs}$,\\
\>\>\textbf{initially}~$\{P_j\in R_i~|~{\it state}_j={\sf InCS}\}$ in a safe initial configuration;\\
\>  $\textit{currentInCS}_i : \textbf{set of processIDs}$,~\textbf{initially}~$\emptyset$;\\
\>  $\textit{ackFrom}_i : \textbf{set of processIDs},~ 
                         \textbf{initially}~ \emptyset$;\\
\>  $\textit{responseAgainTo}_i : \textbf{processID},~ 
                         \textbf{initially}~ \textbf{nil}$;\\
\>  $\textit{respAgainReqCnt}_i : \textbf{integer},~ 
                         \textbf{initially}~ 0$;\\
\\

\noindent{Exit():}\\
\>\>\>  /* $\textit{state}_i = \textsf{InCS}$ */\\
\>  {\it mx}.Entry();\\
\>   $\textit{reqCnt}_i := \textit{reqCnt}_i+1$;\\
\>  $\textit{currentInCS}_i:=\emptyset$;\\
\>  \textbf{for-each} $P_j \in Q_i$ \\
\>\>   \textbf{send} 
         $\langle \textsf{Query}, \textit{reqCnt}_i, P_i \rangle$ \textbf{to} $P_j$;\\
\>  \textbf{wait until} $(|\textit{currentInCS}_i|\geq l+1)$; \\
\>   $\textit{ackFrom}_i := \emptyset$;\\
\>  \textbf{for-each} $P_j \in Q_i$ \\
\>\>   \textbf{send} $\langle \textsf{Acquire}, P_i \rangle$
       \textbf{to} $P_j$;\\
\>  \textbf{wait until} $(\textit{ackFrom}_i= Q_i)$; \\
\>  {\it mx}.Exit();\\
\>\>\>  /* $\textit{state}_i = \textsf{OutCS}$ */\\
\\

\noindent{Entry():}\\
\>\>\>  /* $\textit{state}_i = \textsf{InCS}$ */\\
\>  \textbf{for-each} $P_j \in Q_i$ \\
\>\>   \textbf{send} $\langle \textsf{Release}, P_i \rangle$
       \textbf{to} $P_j$;
\end{tabbing}
\end{algorithm}
\setcounter{algorithm}{1}
\begin{algorithm}[hpt]
\caption{A class description ${\it MUTIN}(l)$ for $l$-mutual inclusion (continued)}
\label{al2}
\begin{tabbing}
\quad\=\quad\=\quad\=\quad\=\quad\=\quad\=\quad\=\quad\=\quad\=\kill
\noindent{On receipt of a
          $\langle \textsf{Query}, \textit{reqCnt}, P_j \rangle$ message:} \\
\>  \textbf{send} 
        $\langle \textsf{Response1}, \textit{procsInCS}_i, \textit{reqCnt}, P_i \rangle$
        \textbf{to} $\textit{P}_j$;\\
\> $\textit{responseAgainTo}_i := P_j$;\\
\> $\textit{respAgainReqCnt}_i := \textit{reqCnt}$;\\
\\
\noindent{On receipt of a $\langle \textsf{Response1}, \textit{procsInCS}, \textit{reqCnt}, P_j\rangle$ message:} \\
\> \textbf{if}~$(\textit{reqCnt}_i = \textit{reqCnt})$ \\
\>\>  $\textit{currentInCS}_i := \textit{currentInCS}_i \cup\textit{procsInCS}$; \\
\\
\noindent{On receipt of a
          $\langle \textsf{Acquire}, P_j \rangle$
          message:} \\
\>  $\textit{procsInCS}_i := \textit{procsInCS}_i \backslash \{P_j\}$;\\
\>  \textbf{send} $\langle \textsf{Ack}, P_i \rangle$ \textbf{to} $\textit{P}_j$;\\
\>  $\textit{responseAgainTo}_i := \textbf{nil}$;\\
\>  $\textit{respAgainReqCnt}_i := 0$;\\
\\
\noindent{On receipt of a
          $\langle \textsf{Ack}, P_j \rangle$
          message:} \\
\> $\textit{ackFrom}_i := \textit{ackFrom}_i \cup \{P_j \}$; \\
\\
\noindent{On receipt of a
          $\langle \textsf{Release}, P_j \rangle$
          message:} \\
\>  $\textit{procsInCS}_i := \textit{procsInCS}_i \cup \{P_j\}$;\\
\>  \textbf{if}~$(\textit{responseAgainTo}_i\neq\textbf{nil})$~\{\\
\>\>  \textbf{send} 
        $\langle \textsf{Response2}, \textit{procsInCS}_i, \textit{respAgainReqCnt}_i, P_i\rangle$ \textbf{to} $\textit{responseAgainTo}_i$;\\
\>\>  $\textit{responseAgainTo}_i := \textbf{nil}$;\\
\>\>  $\textit{respAgainReqCnt}_i := 0$;\\
\>  \}\\
\\
\noindent{On receipt of a
          $\langle \textsf{Response2}, \textit{procsInCS}, \textit{reqCnt}, P_j\rangle$ message:} \\
\> \textbf{if}~$(\textit{reqCnt}_i = \textit{reqCnt})$ \\
\>\>  $\textit{currentInCS}_i := \textit{currentInCS}_i \cup\textit{procsInCS}$; 
\end{tabbing}
\end{algorithm}

First, we present an outline how each process know the set
of processes in {\sf InCS} state in a distributed manner with quorums.
When $P_i$ changes its state,
$P_i$ notifies each process in a quorum $Q_i$ its state.
When $P_i$ wants to know the set of processes in {\sf InCS},
$P_i$ contacts with each process in $Q_i$.
For each process $P_k\in V$, because of the intersection property of quorums, there exists at least one process $P_j \in Q_k\cap Q_i\neq\emptyset$.
Hence, $P_k$ notifies its state to $P_j$, and $P_j$ sends the state of $P_k$ to $P_i$.
For this reason, when $P_i$ contacts each process in $Q_i$, $P_i$ obtains information about all the processes.

In the proposed algorithm,
each $P_i$ maintains a local variable $\textit{procsInCS}_i$ that keeps track of a set of processes in \textsf{InCS} state in $R_i$, where $R_i = \{P_k ~|~ P_k \in V \land P_i \in Q_k \}$ is the set of processes which inform about the states of processes to $P_i$.
Note that, $P_k \in R_i \Leftrightarrow P_i \in Q_k$ holds.
The value of $\textit{procsInCS}_i$ is maintained by the following way.
\begin{itemize*}
\item When $P_i$ is in {\sf InCS} state and wishes to change its state into {\sf OutCS} in Exit(),
$P_i$ sends an \textsf{Acquire} message to each $P_j \in Q_i$.
\item When $P_i$ changes its state into {\sf InCS} in Entry(),
$P_i$ sends a \textsf{Release} message to each $P_j \in Q_i$.
\item When $P_i$ receives an {\sf Acquire} message from $P_j$,
$P_i$ adds $P_j$ to $\textit{procsInCS}_i$.
\item When $P_i$ receives a {\sf Release} message from $P_j$,
$P_i$ deletes $P_j$ from $\textit{procsInCS}_i$.
\end{itemize*}
We assume that the initial value of $\textit{procsInCS}_i$ is the set of processes $P_j\in R_i$ in \textsf{InCS} state in the initial configuration.

Next, we explain the idea to guarantee safety.
When $P_i$ changes its state into {\sf InCS} by Entry(),
$P_i$ immediately sends a \textsf{Release} message to each $P_j \in Q_i$.
By Entry(), the number of processes in {\sf InCS} increases by 1.
Thus, the safety is trivially maintained.

When $P_i$ wishes to change its state into {\sf OutCS} by Exit(),
the safety is maintained by the following way.
\begin{itemize*}
\item First, $P_i$ sends a \textsf{Query} message to each process $P_j \in Q_i$.
Then, each $P_j \in Q_i$ sends a \textsf{Response1} message with ${\it procsInCS}_j$ back to $P_i$.
\item $P_i$ stores ${\it procsInCS}$ which $P_i$ received from each $P_j\in Q_i$ in variable $\textit{currentInCS}_i$.
That is, ${\it currentInCS}_i = \bigcup_{P_j \in Q_i} {\it procsInCS}_j$ holds.
\item If $|\textit{currentInCS}_i|\geq l+1$ holds, then at least $l+1$ processes are in {\sf InCS} state.
Thus, even if $P_i$ changes its state from {\sf InCS} to {\sf OutCS}, 
at least $l$ processes remain in {\sf InCS} state.
Then, safety is maintained.
Therefore, only if the condition $|\textit{currentInCS}_i|\geq l+1$ is satisfied, $P_i$ 
sends an \textsf{Acquire} message to each $P_j \in Q_i$, and changes its state to {\sf OutCS}.
\end{itemize*}

Above idea guarantees safety if only one process wishes to change its state into {\sf OutCS},
however, it does not if more than one processes wish to change their state into {\sf OutCS}.
To avoid this situation, we serialize requests which occur concurrently.
One of the typical techniques to serialize is using the priority based on the timestamp and the preemption mechanism of permissions.
This technique is employed in a lot of distributed mutual exclusion algorithms.
We use this technique for serialization, however,
for simplicity of the description of the proposed algorithm,
we use an ordinary mutual exclusion algorithm \cite{maekawa85} in the
proposed algorithm instead of explicitly use timestamp and
preemption mechanism. This is because
typical ordinary mutual exclusion algorithms use
the same mechanism for serialization,
and hence underlaying mechanism is essentially the same.
We denote the object for the ordinary mutual exclusion with {\it mx}.
When a process wishes to change its state into {\sf OutCS},
it invokes the {\it mx}.Entry() method and this allows  
it to enter the CS of mutual exclusion.
After changing its state into {\sf OutCS} successfully,
it invokes the {\it mx}.Exit() method and this allows it to
exit the CS of mutual exclusion.
Thus, by incorporating a distributed mutual exclusion algorithm {\it mx}, 
the state change from {\sf InCS} to {\sf OutCS} is serialized
between processes.
Additionally, before execution of $P_i$'s {\it mx}.Exit(), $P_i$ waits to receive {\sf Ack} messages which are responses from each $P_j\in Q_i$ to an {\sf Acquire} message sent by $P_i$.
Thus, the update of the variable $\textit{procsInCS}_j$ is atomic.
By this way, it is ensured that each process $P_k\in \textit{currentInCS}_i$ is in ${\sf InCS}$. Thus, $\#L \geq |{\it currentInCS}_i|$ is guaranteed.

Finally, we explain the idea to guarantee liveness.
When exactly $l$ processes are in {\sf InCS} state,
$P_i$ observes this by the {\sf Query}/{\sf Response1} message exchange, and
$P_i$ is blocked.
When process $P_k$ enters CS, its {\sf Release} message is sent to
each process in $Q_k$, and some $P_j \in Q_k \cap Q_i$ sends a {\sf Response2}
message to $P_i$.
Hence $P_i$ is eventually unblocked.
Note that, there exists at least such $P_j$ because of the
intersection property of quorums.

Even if there are more than $l$ processes in {\sf InCS} state,
there is a case that $P_i$ observes that the number of processes in {\sf InCS}
state is $l$ by the {\sf Query}/{\sf Response1} message exchange.
When this occurs, $P_i$ is blocked not to violate the safety.
This case occurs if the {\sf Release} message from some $P_k$ is in transit
towards $P_j \in Q_k \cap Q_i$ by asynchrony of message passing
when $P_j$ handles the {\sf Query} message from $P_i$.
Even if this case occurs,
the {\sf Release} message of $P_k$ eventually arrives to
some $P_j \in Q_k \cap Q_i$.
Then, $P_j$ sends a {\sf Response2} message to $P_i$.
Hence $P_i$ is eventually unblocked.
Because $P_i$ is unblocked by single {\sf Response2} message,
it is enough for each process to send {\sf Response2} message at most once.

Class ${\it MUTIN}(l)$
uses the following local variables for each process $P_i \in V$.
\begin{itemize*}
\item $\textit{reqCnt}_i: \textbf{integer},~ \textbf{initially}~ 0$
\begin{itemize*}
\item The request counter of $P_i$.
    This value is used by \textsf{Response1}/\textsf{Response2} message to distinguish it from the corresponding \textsf{Query} message.
\end{itemize*}
\item $\textit{procsInCS}_i: \textbf{set of processIDs}$
\begin{itemize*}
\item A set of processes in \textsf{InCS} state to the best knowledge of $P_i$.
\end{itemize*}
\item $\textit{currentInCS}_i: \textbf{set of processIDs}$
\begin{itemize*}
\item A set of processes in \textsf{InCS} state,
which are gathered by $P_i$. That is, each process in this set is known to be in \textsf{InCS} state by some process in quorum $Q_i$.
\end{itemize*}
\item $\textit{ackFrom}_i: \textbf{set of processIDs},~ \textbf{initially}~ \emptyset$
\begin{itemize*}
\item A set of processes from which $P_i$ receives an \textsf{Ack} message.
    An \textsf{Ack} message is an acknowledgment of an \textsf{Acquire} message 
    sent to each $P_j \in Q_i$, 
    where $P_i$ waits while $\textit{ackFrom}_i = Q_i$ holds.
    Due to this handshake, 
    $P_i \not\in \textit{procsInCS}_j$ is guaranteed for each $P_j \in Q_i$
    before $P_i$ invokes {\it mx}.Exit().
\end{itemize*}
\item $\textit{responseAgainTo}_i: \textbf{processID},~ \textbf{initially}~ \textbf{nil}$
\begin{itemize*}
\item A process id $P_j$ to which $P_i$ should send a \textsf{Response2} message
    when $P_j$ is waiting for $|\textit{currentInCS}_j|$ to exceed $l$. This value sets when $P_i$ receives a {\sf Query} message. 
\end{itemize*}
\item $\textit{respAgainReqCnt}_i: \textbf{integer},~ \textbf{initially}~ 0$
\begin{itemize*}
\item Request count value for the {\sf Query}
    of the process $\textit{responseAgainTo}_i$.
\end{itemize*}
\end{itemize*}

\subsection{Proof of correctness of ${\it MUTIN}(l)$}
\label{SUBSEC:GCS:PROOF2}

In this subsection, we again denote the number of processes with ${\it state}={\sf InCS}$ by $\#L$.

\begin{lemma}\label{LEM:SAFETY}
\textnormal{(Safety)}~
The number of processes in \textsf{InCS} state is at least $l$ at any time.
\end{lemma}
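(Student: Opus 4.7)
The plan is to establish the invariant $\#L \geq l$ by induction on the sequence of state transitions in an arbitrary execution. The base case is immediate from the assumption that the initial configuration $C_0$ is safe, giving $\#L \geq l$ in $C_0$. For the induction step, I would observe that $\#L$ can only change when some process changes its state. A transition $\textsf{OutCS} \to \textsf{InCS}$ happens via \textnormal{Entry()} and strictly increases $\#L$, so it trivially preserves the invariant. Hence the entire substance of the argument concerns transitions $\textsf{InCS} \to \textsf{OutCS}$ executed inside \textnormal{Exit()}.

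Fix such a transition performed by a process $P_i$. Looking at the code of \textnormal{Exit()}, the state change takes place only after $P_i$ has acquired the ordinary mutual exclusion object {\it mx} (via {\it mx}.\textnormal{Entry()}), passed the guard $|\textit{currentInCS}_i|\geq l+1$, and received an \textsf{Ack} from every $P_j\in Q_i$. The goal is then to prove that at this instant $\#L \geq l+1$, so that after $P_i$'s transition $\#L \geq l$. The guard gives us $l+1$ process identifiers in $\textit{currentInCS}_i$; the key sub-claim is that every $P_k\in\textit{currentInCS}_i$ is in \textsf{InCS} state at the moment of $P_i$'s transition. Each such $P_k$ was inserted into $\textit{currentInCS}_i$ by a \textsf{Response1} or \textsf{Response2} message from some $P_j\in Q_i$ that carried $P_k\in\textit{procsInCS}_j$. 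Using the FIFO channels together with the facts that \textsf{Release} is sent only when $P_k$ enters \textsf{InCS} and \textsf{Acquire} is sent only while $P_k$ is still in \textsf{InCS} during its own \textnormal{Exit()}, I would show that $P_k\in\textit{procsInCS}_j$ implies $P_k$ was in \textsf{InCS} when $P_j$ composed the response. Finally, any transition of $P_k$ out of \textsf{InCS} between that moment and the moment of $P_i$'s transition would require $P_k$ to execute the state-changing portion of its own \textnormal{Exit()}; but that region is protected by {\it mx}, which $P_i$ holds throughout. Hence $P_k$ is still in \textsf{InCS}, and all $l+1$ members of $\textit{currentInCS}_i$ are genuinely in \textsf{InCS}, yielding $\#L \geq l+1$.

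The step I expect to be the main obstacle is the sub-claim that each $P_k\in\textit{currentInCS}_i$ really is in \textsf{InCS} at the moment of $P_i$'s transition, because it mixes three sources of asynchrony: the arrival order of \textsf{Release}/\textsf{Acquire} messages at a given $P_j$, the possibility that $P_k$ has completed several \textsf{OutCS}/\textsf{InCS} cycles between the \textsf{Query} and the moment of the transition, and the fact that $\textsf{Response2}$ may carry a snapshot of $\textit{procsInCS}_j$ taken at a time later than the matching $\textsf{Response1}$. I would resolve this by using the FIFO ordering on the channel $(P_k,P_j)$ to argue that the most recently received of $\{\textsf{Release},\textsf{Acquire}\}$ from $P_k$ at $P_j$ reflects $P_k$'s most recent completed state change, and by invoking the {\it mx}-serialization to exclude any \emph{new} $\textsf{Acquire}$ from $P_k$ (other than $P_i$'s own) arriving during $P_i$'s protected region. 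Once the sub-claim is in place, the safety invariant follows, and the lemma is proved.
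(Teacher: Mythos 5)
Your proposal is correct and takes essentially the same route as the paper's own proof: both arguments rest on the invariant that $P_k\in\textit{procsInCS}_j$ implies $P_k$ is actually in \textsf{InCS} (maintained by the \textsf{Acquire}/\textsf{Ack} handshake completing before the state change), combined with the \textit{mx}-serialization of \textnormal{Exit()} to guarantee that the $l+1$ witnesses collected in $\textit{currentInCS}_i$ are still in \textsf{InCS} at the moment $P_i$ changes state. Your FIFO-based, induction-on-transitions packaging is just a more explicit rendering of the same two ingredients the paper states more tersely.
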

\begin{proof}
First, in each point of the execution, for each $P_i$,
we show that $P_j \in {\it procsInCS}_i \Rightarrow {\it state}_j = {\sf InCS}$.

In the initial configuration, ${\it procsInCS}_i$ is the set of processes in $R_i$ in {\sf InCS}.
Thus, $P_j \in {\it procsInCS}_i \Rightarrow {\it state}_j = {\sf InCS}$ holds.

Consider the case that, in the configuration such that 
$P_j \in {\it procsInCS}_i \Rightarrow {\it state}_j = {\sf InCS}$ holds, 
${\it state}_j$ changes from {\sf InCS} to {\sf OutCS}.
Such case occurs only when $P_j$ invokes Exit().
In the Exit() execution of $P_j$, because of {\it mx}.Enter()/{\it mx}.Exit() and waiting to update ${\it procsInCS}_i$ by {\sf Ack} message,
$P_j$ is not included in any ${\it procsInCS}_i$ when $P_j$ finishes the execution of Exit().
Thus, $P_j \in {\it procsInCS}_i \Rightarrow {\it state}_j = {\sf InCS}$ holds.

Now, we show that the safety is guaranteed.
In the initial configuration, it is clear that the safety is guaranteed because $\#L\geq l$.
We observe the execution after that.
In the algorithm, only when $|{\it currentInCS}_i| \geq l+1$ is satisfied, $P_i$ exits from the CS.
The value of ${\it currentInCS}_i$ is computed based on {\sf Response1} and {\sf Response2} messages. That is, ${\it currentInCS}_i = \bigcup_{P_j \in Q_i} {\it procsInCS}_j$ holds.
By {\it mx} in Exit(), because other processes than $P_i$ do not invoke Exit(),
$P_j \in {\it currentInCS}_i \Rightarrow {\it state}_j = {\sf InCS}$ holds.
Thus, $\#L \geq |{\it currentInCS}_i|$ holds.
Therefore, because $\#L \geq l+1$,
even if $P_i$ changes its state to {\sf OutCS}, $\#L \geq l$ holds.
That is, lemma holds. \qed
\end{proof}
\begin{lemma}\label{LEM:LIVENESS}
\textnormal{(Liveness)}~
Each process $P_i \in V$ 
changes \textsf{OutCS} and \textsf{InCS} states 
alternately infinitely often.
\end{lemma}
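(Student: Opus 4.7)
The plan is to proceed by contradiction: assume some process $P_i$ is blocked forever, and locate the blocking point. Since Entry() consists only of a for-each send and an immediate return, it never blocks, so $P_i$ must be stuck inside Exit() at one of its four waits: \textit{mx}.Entry(), the wait on $|\textit{currentInCS}_i|\geq l+1$, the wait on $\textit{ackFrom}_i = Q_i$, or \textit{mx}.Exit(). The calls \textit{mx}.Entry() and \textit{mx}.Exit() occur only as paired invocations inside Exit(), so CSMIC for \textit{mx} holds by an argument analogous to Lemma~\ref{CSMIC}, and by the liveness of the underlying mutual exclusion algorithm both calls eventually return. The wait on \textsf{Ack} messages terminates because the handler of an \textsf{Acquire} message performs no wait and immediately replies with \textsf{Ack}, so every $P_j \in Q_i$ responds as soon as $P_i$'s \textsf{Acquire} is delivered. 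Thus only the wait on $|\textit{currentInCS}_i|\geq l+1$ is nontrivial.

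For this wait, I would first observe that while $P_i$ holds \textit{mx}, no other process is between its own \textit{mx}.Entry() and \textit{mx}.Exit(), so no other \textsf{Query} is in flight; once every $P_j \in Q_i$ has processed $P_i$'s \textsf{Query}, $\textit{responseAgainTo}_j = P_i$ and stays so until $P_i$ sends its \textsf{Acquire} (which it has not yet done). I then split on the current value of $\#L$. When $\#L \geq l+1$, pick any $l+1$ processes $P_{k_1},\dots,P_{k_{l+1}}$ in \textsf{InCS} and, for each, use the intersection property to choose $P_{j_a}\in Q_{k_a}\cap Q_i$. Since $P_{k_a}$ is still in \textsf{InCS} and no other Exit() can progress while $P_i$ holds \textit{mx}, the most recent state-change message $P_{k_a}$ sent toward $P_{j_a}$ is a \textsf{Release}. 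If $P_{j_a}$ has already processed that \textsf{Release} at \textsf{Query}-handling time, then $P_{k_a}$ is carried back in the corresponding \textsf{Response1}; if not, the \textsf{Release} is still in transit and will later be processed, at which point, because $\textit{responseAgainTo}_{j_a} = P_i$, the handler dispatches a \textsf{Response2} to $P_i$ containing $P_{k_a}$. Either way, $P_{k_a}$ is eventually inserted into $\textit{currentInCS}_i$, so $|\textit{currentInCS}_i| \geq l+1$ is attained in finite time.

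The main obstacle is the remaining case $\#L = l$, where Lemma~\ref{LEM:SAFETY} only guarantees equality and $P_i$ alone cannot make the threshold grow. The idea is to force $\#L$ to strictly exceed $l$ via the alternation assumption at the top of each process: since $l < n$, some $P_m$ is in \textsf{OutCS}, and $P_m$ must eventually invoke Entry(), which contains no wait and transitions $P_m$ into \textsf{InCS} after broadcasting \textsf{Release}. While $P_i$ holds \textit{mx}, no other Exit() can pass its own \textit{mx}.Entry(), so $\#L$ is monotonically non-decreasing during the wait; once $P_m$ completes Entry() we have $\#L \geq l+1$ and the previous case applies. The delicate point, which I expect will require the most care, is justifying that some \textsf{OutCS} process actually completes Entry(); but since every blocking wait of the algorithm sits inside Exit() and a process in \textsf{OutCS} does not execute Exit(), no \textsf{OutCS} process can be stuck, and the alternation assumption forces the desired progress, contradicting the supposition that $P_i$ is blocked forever.
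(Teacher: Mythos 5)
Your overall strategy matches the paper's: argue by contradiction, note that Entry() never blocks, dispose of the \textsf{Ack} wait, reduce to the unique process that holds $\textit{mx}$ and is stuck at the wait on $|\textit{currentInCS}_i| \geq l+1$, and then unblock it via the intersection property and the \textsf{Release}/\textsf{Response2} mechanism. Your endgame is organized a little differently --- you split on $\#L = l$ versus $\#L \geq l+1$ and use one newly entering \textsf{OutCS} process to break the tie, whereas the paper drives all $n$ processes into \textsf{InCS} and then exhibits a single unreported witness --- but these are interchangeable. One ordering caveat: your opening claim that $\textit{mx}$.Entry() ``eventually returns by the liveness of the underlying mutual exclusion algorithm'' is circular at that point, since $\textit{mx}$'s liveness presupposes that the current holder eventually reaches $\textit{mx}$.Exit(), which is exactly what the remainder of your proof must establish. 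The paper avoids this by first ruling out that \emph{all} blocked processes sit at $\textit{mx}$.Entry() and only then analyzing the one process at the critical wait; your argument is recoverable by the same reordering.

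The substantive gap is in your $\#L \geq l+1$ case. For each witness $P_{k_a}$ you pick $P_{j_a} \in Q_{k_a} \cap Q_i$ and claim that when $P_{j_a}$ later processes $P_{k_a}$'s \textsf{Release} it ``dispatches a \textsf{Response2} containing $P_{k_a}$.'' But the \textsf{Release} handler sends a \textsf{Response2} only when $\textit{responseAgainTo}_{j_a} \neq \textbf{nil}$ and immediately resets it to $\textbf{nil}$ afterward, so each quorum member sends \emph{at most one} \textsf{Response2} per \textsf{Query}. If the first \textsf{Release} that $P_{j_a}$ processes after $P_i$'s \textsf{Query} comes from a process already in $\textit{currentInCS}_i$, that single \textsf{Response2} is spent without adding anything new, and $P_{k_a}$'s later \textsf{Release} generates no further message; your argument therefore does not show that all $l+1$ chosen witnesses reach $\textit{currentInCS}_i$. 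The paper structures the endgame precisely so that only \emph{one} \textsf{Response2} is ever needed: it argues that the \textsf{Response1} replies alone already give $|\textit{currentInCS}_k| \geq l$ (using the \textsf{Ack} handshake that makes updates of $\textit{procsInCS}_j$ atomic together with the safety bound $\#L \geq l$), so a single \textsf{Response2} carrying one fresh process suffices, and it explicitly remarks that one \textsf{Response2} per process is enough. To repair your proof you should either reproduce that ``\textsf{Response1} already yields $l$ witnesses'' step, or otherwise arrange to rely on at most one \textsf{Response2} in total.
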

\begin{proof}
By contrast, suppose that some processes do not change
\textsf{OutCS} and \textsf{InCS} states alternately infinitely often.
Let $P_i$ be any of these processes. 
Because Entry() has no blocking operation, 
we assume that $P_i$ is blocked from executing the Exit() method.
There are three possible reasons that $P_i$ is blocked
in the Exit() method:
(1)~$P_i$ is blocked by {\it mx}.Entry(),
(2)~$P_i$ is blocked by the first \textbf{wait} statement in Exit() method, or
(3)~$P_i$ is blocked by the second \textbf{wait} statement in Exit() method.

Any process is not blocked forever by case (3) because each $P_j \in Q_i$ immediately sends back an {\sf Ack} message in response to an {\sf Acquire} message.
Below, we consider cases (1) and (2).

First, we consider the case that 
all of the blocked processes are blocked by {\it mx}.Entry(), that is, all of the blocked processes are in case (1).
However, this situation never occurs 
because we have incorporated a mutual exclusion algorithm with liveness.
Thus, at least one process is blocked in case (2).

The number of processes that is blocked in case (2) is exactly one
because no two process reach the corresponding statement at the same time
by {\it mx}.Entry().

Additionally, we claim that  
all of the processes are eventually blocked in case (1), except $P_k$ in case (2).
Each non-blocked process in {\sf InCS} state eventually calls the Exit() method
and it is then blocked by {\it mx}.Entry()
because $P_k$ obtains the lock of mutual exclusion.
Now the system reaches a configuration in which $P_k$ is blocked in case (2),
remaining $n-1$ processes are blocked in case (1), and all the processes are in {\sf InCS} state.

Finally, we show that $P_k$ is unblocked eventually.
Recall that $P_k$ is blocked in case (2), {\it i.e.},
it is waiting for a condition $|{\it currentInCS}_k| \geq l + 1$ becomes true.

The size of a collection $\bigcup_{P_j \in Q_k} {\it procsInCS}_j$, 
each of which is attached to the {\sf Response1} message sent from $P_j$ to $P_k$,
is at least $l$, {\it i.e.}, $|{\it currentInCS}_k| \geq l$ holds,
because atomic update of each ${\it procsInCS}_j$,
$\#L \geq l$ holds by the safety property, and,
for any $P_x \in V$, there exists $P_j \in Q_i$ such that $P_j \in Q_x$
by intersection property of quorums.

Although it is assumed that $|{\it currentInCS}_k| = l$ holds and
$P_k$ is blocked, a {\sf Release} message from some $P_y$ which is not in 
${\it currentInCS}_k$ eventually arrives at some $P_j$ in $Q_k$, and $P_j$ sends
a {\sf Response2} message which includes $P_y$ to $P_k$.
Note that such process $P_y$ exists because $n > l$ is assumed
and $Q_y \cap Q_k \ne \emptyset$ holds by the intersection property of quorums.
Hence, $P_k$ observes $|{\it currentInCS}_k| = l + 1$ when it receives the Response2 message, and it is unblocked.\qed
\end{proof}
\begin{lemma}\label{mess}
The message complexity of ${\it MUTIN}(l)$ is $O(|Q|)$,  
where $|Q|$ is the maximum size of the quorums of a coterie used by ${\it MUTIN}(l)$.
\end{lemma}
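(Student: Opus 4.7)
The plan is to tally, for a single paired invocation of Exit() and Entry() at a process $P_i$, the number of messages whose transmission is triggered by that pair, and to show that each class of messages contributes $O(|Q_i|)$, hence $O(|Q|)$ in the worst case.

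First I would walk through Exit() phase by phase. The embedded ${\it mx}$.Entry() and ${\it mx}$.Exit() calls use Maekawa's mutual exclusion algorithm, whose message cost per entry/exit pair is $O(|Q|)$. The subsequent broadcasts are all of the form ``\textbf{for-each} $P_j \in Q_i$ \textbf{send}\dots'': the Query broadcast contributes $|Q_i|$ messages, which in turn produces exactly $|Q_i|$ Response1 replies; the Acquire broadcast contributes $|Q_i|$ messages and elicits exactly $|Q_i|$ Ack replies. Entry() itself consists of a single Release broadcast to $Q_i$, contributing $|Q_i|$ messages. Summing these gives $O(|Q_i|) = O(|Q|)$ for all messages that $P_i$ directly sends or is directly acknowledged with.

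The only messages not in obvious one-per-quorum-member form are the Response2 messages that $P_i$ may receive. Here I would argue that each $P_j \in Q_i$ sends at most one Response2 back to $P_i$ per Exit() invocation. The reason is the handshake on $\textit{responseAgainTo}_j$: it is set to $P_i$ when $P_j$ handles the Query from $P_i$, and it is reset to $\textbf{nil}$ as soon as $P_j$ sends a Response2 in its Release handler (or when $P_j$ later processes $P_i$'s Acquire). Thus at most $|Q_i|$ Response2 messages are delivered to $P_i$ as a consequence of its single Exit(). This is the step I expect to be the main obstacle, because it is the only place where the message count does not reduce to a clean broadcast to $Q_i$, and it requires observing the bookkeeping invariant on $\textit{responseAgainTo}_j$.

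Finally I would collect the bounds: $O(|Q|)$ from the underlying ${\it mx}$, plus a constant number of broadcast/gather rounds of size $|Q_i|$, plus at most $|Q_i|$ Response2 messages, giving $O(|Q|)$ messages per Exit()/Entry() pair. Since $|Q_i| \le |Q|$ for every $P_i$ by definition of $|Q|$, the lemma follows.
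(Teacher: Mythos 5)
Your proposal is correct and follows essentially the same route as the paper: a per-message-type tally bounding each of \textsf{Query}, \textsf{Response1}, \textsf{Acquire}, \textsf{Ack}, \textsf{Release}, and \textsf{Response2} by $|Q_i|$, plus the $O(|Q|)$ cost of Maekawa's \emph{mx}. Your only addition is spelling out the $\textit{responseAgainTo}_j$ bookkeeping behind the ``at most one \textsf{Response2} per quorum member'' bound, which the paper simply asserts.
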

\begin{proof}
As noted above,
we incorporate a distributed mutual exclusion algorithm with a message complexity of $O(|Q|)$, 
such as that proposed by \cite{maekawa85}.
Thus, {\it mx} requires $O(|Q|)$ messages.

In the Exit() method, $P_i$ sends $|Q_i|$ \textsf{Query} messages.
For each $P_j \in Q_i$, $P_j$ sends exactly one \textsf{Response1} message
for each \textsf{Query} message: $|Q_i|$ \textsf{Response1} messages.
$P_i$ sends $|Q_i|$ \textsf{Acquire} messages.
Then, each $P_j \in Q_i$ sends an \textsf{Ack} message: 
$|Q_i|$ \textsf{Ack} messages.
Hence, $O(|Q|)$ messages are exchanged.

In the Entry() method, 
$P_i$ sends $|Q_i|$ \textsf{Release} messages.
For each $P_j \in Q_i$, $P_j$ sends at most one \textsf{Response2} message
for \textsf{Query} messages: $|Q_i|$ \textsf{Response2} messages.
Therefore, $O(|Q|)$ messages are exchanged.

In total, $O(|Q|)$ messages are exchanged.\qed
\end{proof}
\begin{lemma}\label{wait}
The waiting time of ${\it MUTIN}(l)$ is 7.
\end{lemma}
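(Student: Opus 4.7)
My plan is to bound the waiting time of each method by counting sequential message-passing rounds, using the unit-delay convention standard in this literature (where one message transmission counts as one time unit and local computation is free), then take the maximum over $\textnormal{Exit}()$ and $\textnormal{Entry}()$. Since the Maekawa-style mutex $\mathit{mx}$ used inside the algorithm has a well-known synchronization delay, I will treat $\mathit{mx}.\textnormal{Entry}()$ and $\mathit{mx}.\textnormal{Exit}()$ as black boxes contributing their standard round counts (2 and 1, respectively), and count the additional rounds introduced by the $\textsf{Query}/\textsf{Response1}$ and $\textsf{Acquire}/\textsf{Ack}$ handshakes surrounding them.

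For $\textnormal{Exit}()$, I would walk through the code linearly. The call to $\mathit{mx}.\textnormal{Entry}()$ completes in $2$ rounds (request sent to $Q_i$, then locks granted back). Next, $P_i$ broadcasts $\textsf{Query}$ to every $P_j \in Q_i$ and waits for at least $l+1$ entries in $\mathit{currentInCS}_i$; in the best case (no blocking on the wait) this is satisfied by the arriving $\textsf{Response1}$ messages, contributing $1 + 1 = 2$ rounds. Then $P_i$ broadcasts $\textsf{Acquire}$ to $Q_i$ and waits for $\textsf{Ack}$ from each, another $1 + 1 = 2$ rounds. Finally $\mathit{mx}.\textnormal{Exit}()$ contributes $1$ round for the release to propagate. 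Summing the sequential stages gives $2 + 2 + 2 + 1 = 7$. For $\textnormal{Entry}()$, only the $\textsf{Release}$ broadcast to $Q_i$ occurs, costing $1$ round. Thus the waiting time, being the maximum of the two, equals $7$.

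The main obstacle is deciding how to treat the first $\textbf{wait until}$ in $\textnormal{Exit}()$, which can in principle block arbitrarily long waiting for other processes to enter the CS so that $|\mathit{currentInCS}_i|$ reaches $l+1$. As is standard for waiting-time (synchronization delay) bounds, I will interpret the measure as excluding time spent blocked on semantic preconditions imposed by the $(l,k)$-specification (which by definition depend on the external workload), and count only the message rounds intrinsic to the protocol once the preconditions are satisfied; with this convention the waiting step collapses to the single $\textsf{Response1}$ round already counted. I will make this convention explicit at the start of the proof so that the count $2 + 2 + 2 + 1 = 7$ is unambiguous.
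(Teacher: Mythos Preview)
Your proposal is correct and takes essentially the same approach as the paper: both attribute 3 rounds to the Maekawa mutex ($2$ for $\mathit{mx}.\textnormal{Entry}()$, $1$ for $\mathit{mx}.\textnormal{Exit}()$) and $4$ rounds to the \textsf{Query}/\textsf{Response1}/\textsf{Acquire}/\textsf{Ack} chain, yielding $7$ for \textnormal{Exit}(). The only minor discrepancy is that the paper counts \textnormal{Entry}() as $2$ rounds (it includes the \textsf{Response2} triggered by the \textsf{Release}) rather than your $1$, but since the maximum is governed by \textnormal{Exit}() this does not affect the final bound.
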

\begin{proof}
The waiting time is 3 for the mutual exclusion algorithm employed by ${\it MUTIN}(l)$, which was described 
by Maekawa~\cite{maekawa85}
(2 for Entry() and 1 for Exit(); 
see \cite{mutex-survey}.) 

In Exit(),
a chain of messages, i.e.,
\textsf{Query}, \textsf{Response1}, \textsf{Acquire}, \textsf{Ack}
is exchanged between $P_i$ and the processes in $Q_i$.
Hence, 4 additional time units are required.
In total, the waiting time for exit is 7 time units.

In Entry(), a \textsf{Release} message and a \textsf{Response2} message
are exchanged between $P_i$ and the processes in $Q_i$.
The waiting time for entry is 2 time units.

Thus, the waiting time is 7 time units.\qed
\end{proof}

By lemmas~\ref{LEM:SAFETY}-\ref{wait}, we derived the following theorem.
\begin{theorem}
${\it MUTIN}(l)$ solves the $l$-mutual inclusion problem with a message complexity of $O(|Q|)$
where $|Q|$ is the maximum size of the quorums of a coterie used by ${\it MUTIN}(l)$.
The waiting time of ${\it MUTIN}(l)$ is 7.\qed
\end{theorem}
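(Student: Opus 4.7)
The plan is to observe that this theorem is essentially a bookkeeping corollary that assembles the four preceding lemmas, so the proof proposal is mostly about identifying which lemma establishes which clause of the theorem statement and checking that together they exhaust the definition of an $l$-mutual inclusion algorithm with the claimed complexity measures.

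First I would recall the definition of the $l$-mutual inclusion problem: it is the complement of $(n-l)$-mutual exclusion (cf.~Theorem~\ref{comp} and the discussion of the critical section class in Section~\ref{SEC:PRE}), so a correct solution must guarantee (i) a safety property that at least $l$ processes are in \textsf{InCS} at any reachable configuration, and (ii) a liveness property that every process alternates between \textsf{OutCS} and \textsf{InCS} infinitely often, provided CSMIC is confirmed. Lemma~\ref{LEM:SAFETY} supplies exactly (i) and Lemma~\ref{LEM:LIVENESS} supplies exactly (ii), so together they establish that ${\it MUTIN}(l)$ is a correct solution to the $l$-mutual inclusion problem.

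Next I would invoke Lemma~\ref{mess} to conclude that the message complexity is $O(|Q|)$, with $|Q|$ the maximum quorum size in the coterie used by ${\it MUTIN}(l)$, and then invoke Lemma~\ref{wait} to conclude that the waiting time is $7$. Both quantities are stated exactly as in the theorem, so no additional arithmetic is required.

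The only subtlety worth a sentence of justification is that the three ingredients must refer to the \emph{same} algorithm under the \emph{same} assumptions; in particular, the safety and liveness proofs assume an initial configuration in which $\textit{procsInCS}_i = \{P_j \in R_i \mid \textit{state}_j = \textsf{InCS}\}$ is consistent with a safe global state, which matches the declared initialization in Algorithm~\ref{al1}, and the complexity bounds assume the underlying mutual exclusion object ${\it mx}$ is instantiated by Maekawa's algorithm, which is exactly what Lemmas~\ref{mess} and~\ref{wait} assume. Since there is no genuine obstacle to overcome beyond this consistency check, the proof is a one-line citation: by Lemmas~\ref{LEM:SAFETY}, \ref{LEM:LIVENESS}, \ref{mess}, and~\ref{wait}, the theorem holds.
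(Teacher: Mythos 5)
Your proposal is correct and matches the paper's own argument, which is exactly the one-line assembly ``By Lemmas~\ref{LEM:SAFETY}--\ref{wait}, we derived the following theorem.'' The extra consistency check you mention (same initialization and same instantiation of {\it mx} across all four lemmas) is a reasonable sanity remark but adds nothing beyond what the paper already assumes implicitly.
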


\section{Discussion}
\label{SEC:DIS}

In this section, finally, we discuss the case that, by the complementary theorem (theorem~\ref{comp}), in $(l,k)${\it -GCS},
we use the proposed class as ${\it MUTIN}(l)$ to obtain object {\it lmin}
and as ${\it MUTIN}(n-k)$ to obtain object {\it kmex}.

Then, by the proof of lemma~\ref{mess}, the message complexity is $O(|Q|)$. 
Additionally, by the proof of lemma~\ref{wait}, both of waiting times for exit and entry of $(l,k)${\it -GCS} are 9. 
Thus, by theorem~\ref{GCS}, we derive the following theorem.
\begin{theorem}
$(l,k)${\it -GCS} solves the global $(l,k)$-CS problem
with a message complexity of $O(|Q|)$
where $|Q|$ is the maximum size of the quorums of a coterie used by $(l,k)${\it -GCS}.
The waiting time of $(l,k)${\it -GCS} is 9.\qed
\end{theorem}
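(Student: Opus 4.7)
The plan is to combine the three earlier results with the Complementary Theorem. First, I would instantiate the two gadgets in algorithm $(l,k)${\it -GCS} using the class from Section~\ref{SEC:LMUTIN}: take $lmin := {\it MUTIN}(l)$ directly, and obtain $kmex$ by applying Theorem~\ref{comp} to ${\it MUTIN}(n-k)$, which yields an algorithm for $k$-mutual exclusion. Both gadgets then satisfy the hypotheses of Theorem~\ref{GCS}, so the correctness conclusion (safety $l \le |{\cal CS}_{(l,k)\textnormal{-GCS}}(C)| \le k$ and the alternation liveness) follows immediately.

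Next I would bound the message complexity. By Lemma~\ref{mess}, a single Exit()/Entry() pair of ${\it MUTIN}$ costs $O(|Q|)$ messages, and this count is unaffected by the complementation in Theorem~\ref{comp}, because that theorem merely swaps the labels of the two process states without altering the underlying message pattern. An execution of $(l,k)${\it -GCS}.Exit() invokes $lmin$.Exit() followed by $kmex$.Exit(), and an execution of $(l,k)${\it -GCS}.Entry() invokes $kmex$.Entry() followed by $lmin$.Entry(); summing over both gadgets in either composite method yields $O(|Q|)$ messages.

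For the waiting time I would reuse the internal breakdown in the proof of Lemma~\ref{wait}, which gives an exit waiting time of 7 and an entry waiting time of 2 for ${\it MUTIN}$. Since complementation swaps Exit and Entry, $kmex$ inherits an exit waiting time of 2 and an entry waiting time of 7. Adding the costs along each composite method then gives a waiting time of $7 + 2 = 9$ for $(l,k)${\it -GCS}.Exit() and $7 + 2 = 9$ for $(l,k)${\it -GCS}.Entry(), so the overall waiting time is 9.

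The one subtle point, rather than a serious obstacle, is verifying that the Complementary Theorem preserves both the asymptotic message count and the per-method time bounds inherited from ${\it MUTIN}$. Because complementation only reinterprets the two states $\sf InCS$ and $\sf OutCS$, the actual communication structure (the chains of \textsf{Query}, \textsf{Response1}, \textsf{Acquire}, \textsf{Ack}, \textsf{Release}, \textsf{Response2} exchanges) is carried over verbatim; hence Lemmas~\ref{mess} and~\ref{wait} transfer without modification to $kmex$, and the theorem follows.
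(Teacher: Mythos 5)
Your proposal is correct and follows essentially the same route as the paper: instantiate $lmin$ as ${\it MUTIN}(l)$ and $kmex$ as the complement of ${\it MUTIN}(n-k)$ via Theorem~\ref{comp}, then invoke Theorem~\ref{GCS} for correctness and Lemmas~\ref{mess} and~\ref{wait} for the $O(|Q|)$ message bound and the $7+2=9$ waiting-time accounting. The paper's own justification is just a terse paragraph preceding the theorem; you have merely made its arithmetic and the Exit/Entry swap under complementation explicit.
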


\section{Conclusion}
\label{SEC:CON}
In this paper, we discuss the global critical section problem in asynchronous message passing distributed systems.
Because this problem is useful for fault-tolerance and load
balancing of distributed systems, we can consider various future applications.

In the future, we plan to perform extensive simulations and confirm the performance of our algorithms under various application scenarios.
Additionally, we plan to design a fault tolerant algorithm for the problem.

\section*{Acknowledgement}
This work is supported in part by KAKENHI No. 16K00018 and 26330015.

\bibliographystyle{elsarticle-num}
\bibliography{lkmutin}

\end{document}